\documentclass[11pt]{article}

\usepackage[margin=1in]{geometry}
\usepackage{amsmath,amssymb,amsthm}
\usepackage{mathtools}
\usepackage{bm}
\usepackage{graphicx}
\usepackage{booktabs}
\usepackage{algorithm}
\usepackage{algpseudocode}
\usepackage[colorlinks=true,citecolor=blue,linkcolor=blue,urlcolor=blue]{hyperref}
\usepackage{natbib}
\usepackage{enumitem}
\usepackage{xcolor}

\newtheorem{assumption}{Assumption}
\newtheorem{proposition}{Proposition}
\newtheorem{lemma}{Lemma}

\newtheorem{definition}{Definition}
\newtheorem{remark}{Remark}
\newtheorem*{proof*}{Proof}

\newcommand{\R}{\mathbb{R}}

\newcommand{\argmin}{\operatorname*{arg\,min}}
\DeclareMathOperator{\rank}{rank}
\DeclareMathOperator{\Null}{null}

\title{\textbf{Spectral Truncation in Synthetic Control: \\ Weight Underdetermination and Basis-Estimation Noise}}

\author{
  Mojtaba Eslami \\[4pt]
  \small University of Calgary,  
  \small \texttt{mojtaba.eslami@alumni.ucalgary.ca}
}

\date{}

\date{\today}

\begin{document}
\maketitle

\begin{abstract}
Synthetic control (SC) matches a treated unit's raw pre-treatment trajectory against a weighted combination
of donor units. We study \emph{Spectral SC}, which instead matches in the coordinates of the donor panel's
leading temporal singular vectors, and a \emph{hybrid} estimator that assigns independently tunable weight to
retained and discarded directions, nesting raw-path SC and truncated Spectral SC as endpoints. We prove this
family reduces exactly to raw-path SC at full rank, prove that exact spectral balance with $K$ retained
dimensions and $N_0$ donors is underdetermined whenever $N_0>K+1$ (an affine solution set of dimension
$N_0-K-1$), and give a finite-sample bound relating spectral balancing quality to treatment-effect bias via a
best-linear-predictor decomposition of loadings in spectral scores. We evaluate the family -- with
regularization and mixing weight, but not rank, selected by donor-only placebo validation -- across eleven
data-generating regimes chosen to favor truncation ($400$ replications each, with Monte Carlo standard errors
and paired replication-level comparisons). Truncated Spectral SC has significantly higher RMSE than tuned
raw-path SC in every regime (paired differences $4$--$11$ standard errors from zero); the hybrid estimator
selects raw-path matching outright in a majority of replications throughout and is statistically
indistinguishable from tuned SC in most regimes. A robustness check shows this gap is highly
preprocessing-dependent: under raw input the gap is large, but removing unit and time fixed effects before
the spectral decomposition -- consistent with the assumption underlying our bound -- nearly eliminates it,
and the placebo procedure itself then switches to preferring truncation. We report this as a diagnostic
result, identifying specific mechanisms (basis-estimation noise, balancing underdetermination, and
fixed-effects contamination of the spectral basis) governing when spectral matching can help, rather than
establishing that the version tested here should replace raw-path SC.
\end{abstract}

\noindent \textbf{Keywords:} synthetic control, panel data, low-rank factor models, spectral methods, causal
inference

\section{Introduction}
\label{sec:intro}

Synthetic control (SC) \citep{abadie2003,abadie2010} estimates a treated unit's counterfactual path as a
weighted average of untreated donor units, with weights chosen so the weighted donor pre-treatment trajectory
matches the treated unit's own pre-treatment trajectory. This matching is performed on the raw,
$T_0$-dimensional pre-treatment path.

This paper studies an alternative: matching in the coordinates of the panel's leading temporal singular
vectors rather than in raw time coordinates. The motivating hypothesis is that, if the untreated outcome
process follows a low-rank interactive fixed-effects model, explicitly matching in the directions estimated
to carry the panel's systematic variation might target the underlying factor loadings more directly than
matching the full raw trajectory. This is a hypothesis, not a settled conclusion: raw-path matching may
already estimate factor-relevant weights efficiently, since repeated time observations carry information even
when the underlying process is low rank, and projecting onto an estimated subspace first introduces its own
estimation error while discarding whatever implicit regularization the full trajectory provided. We test the
hypothesis with a simulation study designed specifically to find regimes in which it holds, and we
characterize, formally, two specific mechanisms -- weight underdetermination and basis-estimation noise from
unremoved fixed effects -- that can make it fail.

\paragraph{Scope and claims.} Low-rank and spectral structure in panel causal inference is already well
established. Matrix completion \citep{athey2021}, generalized synthetic control \citep{xu2017}, robust
synthetic control \citep{amjad2018}, and augmented synthetic control \citep{benmichael2021} all use low-rank
or factor-model structure to correct or regularize SC-type estimators, and synthetic difference-in-differences
(SDID) \citep{arkhangelsky2021} is analyzed under a latent factor model. Synthetic Principal Component Design
\citep{lu2022} uses a spectral optimization for the experimental-design problem of which units to treat; a
functional extension of generalized synthetic control uses functional principal component scores for sparse,
irregular panels \citep{shao2026}; and Harmonic Synthetic Control \citep{liu2026} introduces a continuously
tunable spectral allocation for a related but distinct purpose (Section~\ref{sec:related}). We do not claim
that using spectral or low-rank structure in panel causal inference is new. What we study, narrowly, is a
simplex-constrained score-balancing estimator and a hybrid generalization, with two new formal results (weight
underdetermination and a score-to-loading bound) and an honest empirical account of when the resulting
truncation helps -- including a preprocessing-dependent reversal that we think is the paper's most useful
empirical finding.

\paragraph{Roadmap.} Section~\ref{sec:setup} sets up notation. Section~\ref{sec:spectral} defines the
spectral and hybrid estimators and proves the full-rank equivalence and weight-underdetermination results.
Section~\ref{sec:identification} gives a formal representability condition and the bias bound.
Section~\ref{sec:preprocessing} addresses how fixed effects are removed before the spectral decomposition and
reports a robustness check that reverses the paper's main conclusion under one preprocessing choice.
Section~\ref{sec:algorithm} describes hyperparameter selection, including which parameters are tuned and
which are fixed by design, and a placebo-sample-size sensitivity check.
Section~\ref{sec:simulation} reports the eleven-regime study with paired statistics.
Section~\ref{sec:discussion} discusses related work and open questions. Appendix~\ref{sec:appendix} gives
reproducibility details.

\section{Setup}
\label{sec:setup}

Let $i=1,\dots,N$ index units and $t=1,\dots,T$ index periods, with $T_0$ pre-treatment and $T_1=T-T_0$
post-treatment periods. Unit $1$ is treated starting at $T_0+1$; units $i=2,\dots,N$ (the donor pool
$\mathcal{C}$, $|\mathcal{C}|=N_0$) are never treated. $Y_{it}=Y_{it}(0)+\tau_{it}W_{it}$, with
$W_{it}\in\{0,1\}$ the treatment indicator, and
\begin{equation}
  Y_{it}(0) = \alpha_i + \delta_t + \ell_i^\top f_t + \varepsilon_{it},
  \label{eq:factor-model}
\end{equation}
with $\alpha_i,\delta_t$ unit and time fixed effects, $\ell_i\in\R^R$ factor loadings, $f_t\in\R^R$ common
factors, $R\ll\min(N,T)$, $\varepsilon_{it}$ idiosyncratic noise -- the standard generating process used to
motivate SC-type estimators \citep{abadie2010,xu2017,athey2021,arkhangelsky2021}.

Stacking donor pre-treatment outcomes gives $Y_{0,\text{pre}}\in\R^{N_0\times T_0}$, approximately low rank
under \eqref{eq:factor-model}. Let $\omega\in\Delta^{N_0}=\{\omega_i\ge0,\sum_i\omega_i=1\}$. Raw-path SC
solves $\hat\omega=\argmin_{\omega\in\Delta^{N_0}}\|y_{1,\text{pre}}-\sum_i\omega_i y_{i,\text{pre}}\|_2^2$,
with $\hat Y_{1t}(0) = \sum_i\hat\omega_i Y_{it}$ for $t>T_0$.

\section{Spectral and Hybrid Balancing}
\label{sec:spectral}

\subsection{Spectral representation and the baseline estimator}

Let $Y_{0,\text{pre}}=U\Sigma V^\top$ be the SVD of the donor pre-treatment matrix, estimated from donor
pre-treatment cells only. Fix $K$ and keep the leading components, $V_K\in\R^{T_0\times K}$. Project every
pre-treatment path onto $V_K$: $s_1=V_K^\top y_{1,\text{pre}}$, $s_i=V_K^\top y_{i,\text{pre}}$ for
$i\in\mathcal{C}$, and solve
\begin{equation}
  \hat\omega = \argmin_{\omega\in\Delta^{N_0}} \Big\| s_1 - \sum_{i=1}^{N_0}\omega_i s_i \Big\|_2^2 +
  \lambda_\omega \|\omega\|_2^2, \qquad
  \hat Y_{1t}(0) = \sum_{i=1}^{N_0} \hat\omega_i Y_{it} \ \ (t>T_0).
  \label{eq:spectral-sc}
\end{equation}

\begin{proposition}[Equivalence to raw-path SC at full rank]
\label{prop:equiv}
Suppose $T_0\le N_0$ and $K=T_0$, so $V_K=V\in\R^{T_0\times T_0}$ is square and orthogonal. Then for every
$\omega\in\Delta^{N_0}$, $\|V_K^\top(y_{1,\text{pre}}-Y_{0,\text{pre}}^\top\omega)\|_2^2 =
\|y_{1,\text{pre}}-Y_{0,\text{pre}}^\top\omega\|_2^2$, so \eqref{eq:spectral-sc} with $K=T_0$ has exactly the
same objective, feasible set, and minimizer as raw-path SC with the same ridge penalty.
\end{proposition}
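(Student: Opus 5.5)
The plan is to reduce everything to the fact that an orthogonal matrix preserves Euclidean norms. First I will fix the shapes. Since $Y_{0,\text{pre}}\in\R^{N_0\times T_0}$ with $T_0\le N_0$, the full SVD $Y_{0,\text{pre}}=U\Sigma V^\top$ has right singular vector matrix $V\in\R^{T_0\times T_0}$. Keeping $K=T_0$ components means $V_K=V$. So $V_K$ is square and satisfies $V_K^\top V_K=V_KV_K^\top=I_{T_0}$.

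One point needs care. If $Y_{0,\text{pre}}$ has rank below $T_0$, the trailing singular vectors are not unique. Any completion to an orthonormal basis of $\R^{T_0}$ still gives an orthogonal $V$, and the argument below uses only orthogonality. Hence the conclusion does not depend on which completion is chosen.

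Next I will rewrite the spectral objective in matrix form. By linearity of projection, $s_1-\sum_i\omega_i s_i=V_K^\top y_{1,\text{pre}}-\sum_i\omega_i V_K^\top y_{i,\text{pre}}=V_K^\top(y_{1,\text{pre}}-Y_{0,\text{pre}}^\top\omega)$. Here the columns of $Y_{0,\text{pre}}^\top$ are the donor paths $y_{i,\text{pre}}$. Writing $r(\omega)=y_{1,\text{pre}}-Y_{0,\text{pre}}^\top\omega$, we get
\[
\|V_K^\top r(\omega)\|_2^2=r(\omega)^\top V_KV_K^\top r(\omega)=r(\omega)^\top r(\omega)=\|r(\omega)\|_2^2 .
\]
This holds for every $\omega\in\R^{N_0}$, and in particular for every $\omega\in\Delta^{N_0}$. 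This is the displayed identity in the statement.

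Finally I will compare the two optimization problems. The ridge term $\lambda_\omega\|\omega\|_2^2$ does not involve $V_K$, so it is identical in both problems. The feasible set is $\Delta^{N_0}$ in both. The two objective functions therefore agree pointwise on the common feasible set. It follows that the optimal values and the sets of minimizers coincide. When $\lambda_\omega>0$, the objective is strictly convex and the unique minimizer is the same. When $\lambda_\omega=0$, the argmin sets are equal, so any tie-breaking rule applied identically selects the same weights. Because the counterfactual $\hat Y_{1t}(0)=\sum_i\hat\omega_iY_{it}$ depends only on $\hat\omega$, the predictions also coincide.

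There is no real obstacle here. The only step needing attention is the requirement $V_KV_K^\top=I$ rather than merely $V_K^\top V_K=I$. This is exactly where $K=T_0$ (square $V_K$) is used, and where the rank-deficient completion remark matters. For $K<T_0$ one would instead obtain $\|V_K^\top r\|^2=\|P_{V_K}r\|^2\le\|r\|^2$, and equivalence would fail in general.
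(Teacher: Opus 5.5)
Your proof is correct and follows the paper's argument: both rest on $VV^\top = I_{T_0}$ for the square orthogonal $V$, which gives $\|V^\top x\|_2^2 = x^\top VV^\top x = \|x\|_2^2$ pointwise in $\omega$, so the objectives (with identical ridge term and feasible set) coincide. Your extra remarks are sound refinements that the paper leaves implicit: that any orthonormal completion works when $Y_{0,\text{pre}}$ is rank deficient, and that the minimizer is unique when $\lambda_\omega>0$.
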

\begin{proof*}
Let $x=y_{1,\text{pre}}-Y_{0,\text{pre}}^\top\omega$. Since $V$ is orthogonal, $\|V^\top x\|_2^2 = x^\top
VV^\top x = x^\top x = \|x\|_2^2$, for every $\omega$; the objectives coincide pointwise on $\Delta^{N_0}$.
\qed
\end{proof*}

This shows the projection preserves the raw Euclidean pre-treatment balancing discrepancy at $K=T_0$; it does
not, by itself, establish equivalence to any estimator built from more than this discrepancy (see the remark
on SDID in Section~\ref{sec:hybrid}). Any difference between \eqref{eq:spectral-sc} and raw-path SC for
$K<T_0$ comes specifically from truncating modes.

\subsection{Weight underdetermination}
\label{sec:underdetermination}

The title's second mechanism is a precise linear-algebra fact about \eqref{eq:spectral-sc}, not merely a
qualitative observation. Consider the equality-constrained version of the spectral balancing problem
obtained by dropping the ridge penalty and the sign constraints on $\omega$ (i.e.\ exact balance on the
affine hull of the simplex): with donor score matrix $S_K=[s_2\ \cdots\ s_{N_0}]\in\R^{K\times N_0}$, exact
spectral balance requires
\begin{equation}
  S_K\omega = s_1, \qquad \mathbf{1}^\top\omega = 1,
  \label{eq:exact-balance}
\end{equation}
a system of $K+1$ linear equations in $N_0$ unknowns.

\begin{proposition}[Underdetermination of low-rank balancing]
\label{prop:underdetermination}
Let $A=\begin{bmatrix}S_K\\ \mathbf{1}^\top\end{bmatrix}\in\R^{(K+1)\times N_0}$. If $N_0>K+1$, system
\eqref{eq:exact-balance} is consistent (i.e.\ has at least one solution), and $A$ has full row rank $K+1$,
then the solution set $\{\omega\in\R^{N_0}: A\omega = [s_1;1]\}$ is a nonempty affine subspace of dimension
$N_0-K-1$.
\end{proposition}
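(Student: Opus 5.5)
The argument is the standard structure theorem for solution sets of consistent linear systems, specialized to the matrix $A$; no earlier result of the paper is needed beyond the definitions in \eqref{eq:exact-balance}. Write $b=[s_1;1]\in\R^{K+1}$. By the consistency hypothesis, fix one particular solution $\omega^\star$ with $A\omega^\star=b$.

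The first step shows that the solution set equals $\omega^\star+\Null(A)$. If $A\omega=b$, then $A(\omega-\omega^\star)=0$, so $\omega-\omega^\star\in\Null(A)$. Conversely, if $z\in\Null(A)$, then $A(\omega^\star+z)=b$. The solution set is therefore a translate of a linear subspace, hence a nonempty affine subspace (nonempty because it contains $\omega^\star$). Its dimension is by definition $\dim\Null(A)$.

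The second step computes $\dim\Null(A)$ by rank--nullity for $A:\R^{N_0}\to\R^{K+1}$:
\[
\dim\Null(A)=N_0-\rank(A).
\]
Since $A$ has full row rank, $\rank(A)=K+1$, so $\dim\Null(A)=N_0-K-1$. The hypothesis $N_0>K+1$ makes this dimension strictly positive. The solution set is then a genuine continuum rather than a single point, which is the underdetermination claimed.

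There is no real obstacle here. The only point needing care is the role of the hypotheses. Full row rank is what pins the dimension down exactly; without it, the dimension would be $N_0-\rank(A)\ge N_0-K-1$. Consistency is not automatic: it holds whenever $A$ has full row rank, since $A$ is then surjective onto $\R^{K+1}$, but it is stated separately for clarity. One may also remark that full row rank together with $N_0\ge K+1$ already implies consistency, so the consistency assumption is redundant but harmless.

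I would close with a brief remark that intersecting this affine set with the simplex $\Delta^{N_0}$ yields a polytope of dimension at most $N_0-K-1$. This is generically of that full dimension when the set contains a point with all $\omega_i>0$. It connects the linear-algebra fact back to the simplex-constrained estimator \eqref{eq:spectral-sc}, where it is the ridge term $\lambda_\omega\|\omega\|_2^2$ that selects a unique point.
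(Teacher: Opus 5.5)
Your proof is correct and matches the paper's argument: the solution set is a particular solution plus $\Null(A)$, and rank--nullity with $\rank(A)=K+1$ gives dimension $N_0-K-1$. Your side remark is also right: full row rank makes $A$ surjective onto $\R^{K+1}$, so the consistency hypothesis is redundant.
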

\begin{proof*}
By the rank-nullity theorem, $\dim\Null(A) = N_0-\rank(A) = N_0-(K+1)$. If $\omega_0$ is one solution, the
full solution set is $\omega_0+\Null(A)$, an affine translate of $\Null(A)$, hence of the same dimension. \qed
\end{proof*}

Whenever $N_0>K+1$ -- generically true, since $K$ is chosen small by design while $N_0$ is the size of the
available donor pool -- exact spectral balance does not pin down a unique donor weight vector: an entire
$(N_0-K-1)$-dimensional family of weight vectors achieves identical pre-treatment spectral balance while
generally implying \emph{different} post-treatment counterfactuals, since nothing in \eqref{eq:exact-balance}
constrains behavior outside $\mathrm{range}(V_K)$. The ridge penalty $\lambda_\omega\|\omega\|_2^2$ (together
with the simplex sign constraints, which cut down but do not generally eliminate this indeterminacy) is what
selects a particular point from this set; as $N_0-K-1$ grows, that selection carries correspondingly more of
the estimator's effective information content, since the data no longer pin it down. This is the formal
counterpart of the empirical finding in Section~\ref{sec:simulation}: with $N_0=30$ and $K=2$, the affine
solution set of \eqref{eq:exact-balance} has dimension up to $27$, and different regularization strengths can
select donor weight vectors with materially different post-treatment behavior even though all of them
balance the retained two-dimensional score equally well. Raw-path SC, with $K=T_0$, faces the analogous
system with $T_0+1$ equations rather than $K+1$; whenever $T_0$ is not much smaller than $N_0$ this system is
far less underdetermined, and in the $T_0\gg N_0$ regime of Section~\ref{sec:simulation} it can be
overdetermined, which is one reason raw-path matching is implicitly well regularized in that regime without
any explicit penalty.

\subsection{A hybrid estimator}
\label{sec:hybrid}

A less abrupt alternative to truncation assigns discarded directions a strictly positive but shrunk weight
$\eta\in[0,1]$, using $\Pi_K=V_KV_K^\top$:
\begin{equation}
  \hat\omega_\eta = \argmin_{\omega\in\Delta^{N_0}} (y_{1,\text{pre}}-Y_{0,\text{pre}}^\top\omega)^\top
  M_{K,\eta} (y_{1,\text{pre}}-Y_{0,\text{pre}}^\top\omega) + \lambda_\omega\|\omega\|_2^2, \qquad
  M_{K,\eta} = \Pi_K + \eta(I_{T_0}-\Pi_K).
  \label{eq:hybrid}
\end{equation}
By Proposition~\ref{prop:equiv}'s argument, $\eta=1$ recovers raw-path SC exactly for any $K$; $\eta=0$
recovers truncated Spectral SC \eqref{eq:spectral-sc} exactly. Values $\eta\in(0,1)$ interpolate, and by
Proposition~\ref{prop:underdetermination}, moving $\eta$ above $0$ directly reduces the dimension of
directions left unconstrained by the balancing loss -- $\eta>0$ constrains all $T_0$ directions, just with
unequal weight, rather than leaving $T_0-K$ of them entirely unconstrained. This turns truncation into a
continuous parameter tunable by the same placebo procedure used for the other estimators
(Section~\ref{sec:algorithm}).

\begin{remark}[On recovering SDID]
At full rank with identity mode metrics, \eqref{eq:spectral-sc}--\eqref{eq:hybrid} preserve the raw Euclidean
balancing discrepancy (Proposition~\ref{prop:equiv}). Recovering the complete SDID estimator
\citep{arkhangelsky2021} additionally requires its precise unit-weight, time-weight, intercept,
regularization, and final doubly-weighted regression specification; matching one Euclidean discrepancy term
is necessary but not sufficient for that equivalence, and we do not pursue a time-weighted analogue here.
\end{remark}

\section{Identification and Bias}
\label{sec:identification}

\subsection{A formal representability condition}

We use the following as a motivating condition rather than a claim we verify: it states what would need to be
true for spectral balancing to be well targeted, in terms of explicit, interpretable tolerances.

\begin{assumption}[Approximate rank-$K$ representability]
\label{ass:spectral-pt}
Fix tolerances $\delta_s,\delta_\ell\ge0$. There exists $\omega^\star\in\Delta^{N_0}$ such that
\begin{equation}
  \big\|\Pi_K\big(y_{1,\text{pre}}-Y_{0,\text{pre}}^\top\omega^\star\big)\big\|_2 \le \delta_s,
  \qquad
  \Big\|\ell_1-\sum_{i}\omega^\star_i\ell_i\Big\|_2 \le \delta_\ell,
  \label{eq:formal-assumption}
\end{equation}
where $\Pi_K,\{\ell_i\}$ are the same population objects (not re-estimated) across the pre- and post-treatment
windows.
\end{assumption}

Smaller $(\delta_s,\delta_\ell)$ is a stronger, more favorable condition; $\delta_s=\delta_\ell=0$ recovers
exact spectral balance with exact loading representability. Under Assumption~\ref{ass:spectral-pt}, the
loading-mismatch term in Proposition~\ref{prop:bias} below is bounded by $\delta_\ell\|f_t\|_2$ at the oracle
$\omega^\star$; Proposition~\ref{prop:bound} in Section~\ref{sec:bound} relates the \emph{achieved}
estimator's $\hat\omega$ (which need not equal $\omega^\star$) to the \emph{observed} spectral residual
$\|s_1-\sum_i\hat\omega_is_i\|_2$, which is what an analyst can actually compute.

\begin{proposition}[Full-path balance implies rank-$K$ subspace balance]
\label{prop:nesting}
If $\omega\in\Delta^{N_0}$ satisfies $y_{1,\text{pre}}=\sum_i\omega_iy_{i,\text{pre}}$ exactly, it satisfies
$\Pi_K y_{1,\text{pre}}=\Pi_K\sum_i\omega_iy_{i,\text{pre}}$ for every $K,\Pi_K$.
\end{proposition}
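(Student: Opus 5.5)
The plan is to apply a linear map to both sides of the exact balance equation. Let $\omega\in\Delta^{N_0}$ satisfy $y_{1,\text{pre}}=\sum_i\omega_i y_{i,\text{pre}}$, i.e.\ $y_{1,\text{pre}}-Y_{0,\text{pre}}^\top\omega=0$ in $\R^{T_0}$. For any $K$, the matrix $\Pi_K=V_KV_K^\top$ is a $T_0\times T_0$ matrix. It defines a linear map on $\R^{T_0}$, and this holds whether $V_K$ comes from the SVD of Section~\ref{sec:spectral} or is the population object of Assumption~\ref{ass:spectral-pt}. The only property needed is that $\Pi_K$ is a well-defined linear operator on $\R^{T_0}$. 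The argument never uses orthogonality, idempotence, or the simplex constraint.

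The key steps are as follows. First, left-multiply the identity $y_{1,\text{pre}}=\sum_i\omega_i y_{i,\text{pre}}$ by $\Pi_K$. Second, use linearity of matrix multiplication to write
\[
\Pi_K\sum_i\omega_i y_{i,\text{pre}}=\sum_i\omega_i\Pi_K y_{i,\text{pre}}.
\]
Either side is an acceptable reading of the statement. Equivalently, $\Pi_K(y_{1,\text{pre}}-Y_{0,\text{pre}}^\top\omega)=\Pi_K 0=0$. This gives the spectral-score version as a corollary: multiplying by $V_K^\top$ instead gives $s_1=\sum_i\omega_i s_i$. The discrepancy in \eqref{eq:formal-assumption} is then zero, so such an $\omega$ satisfies the first condition of Assumption~\ref{ass:spectral-pt} with $\delta_s=0$.

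There is no real obstacle. The only care needed is to be explicit that "every $K,\Pi_K$" includes any projector, or indeed any matrix. The implication is pure linearity and does not depend on how the spectral basis is estimated. I would also remark that the converse fails in general when $K<T_0$: the kernel of $\Pi_K$ is nontrivial, and this connects to the underdetermination of Proposition~\ref{prop:underdetermination}.
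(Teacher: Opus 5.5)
Your proposal is correct and is essentially the paper's proof, which also just applies $\Pi_K$ to both sides of the exact balance identity. Your added points are accurate: only linearity is used (not orthogonality, idempotence, or the simplex constraint), and multiplying by $V_K^\top$ gives $s_1=\sum_i\omega_i s_i$.
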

\begin{proof*}
Apply $\Pi_K$ to both sides. \qed
\end{proof*}

The converse fails whenever the residual has a nonzero component outside $\mathrm{range}(V_K)$, so
Assumption~\ref{ass:spectral-pt} at a given $\delta_s$ is different from, not proven weaker than, the
corresponding full-path condition.

\subsection{Exact bias decomposition}

\begin{proposition}[Bias decomposition under the factor model]
\label{prop:bias}
Under \eqref{eq:factor-model}, for any $\omega\in\Delta^{N_0}$ and $t>T_0$, with $\hat
Y_{1t}(0)=\sum_{i\in\mathcal{C}}\omega_iY_{it}$ and $\hat\tau_t := Y_{1t}-\hat Y_{1t}(0)$,
\begin{equation}
  \hat\tau_t-\tau_t = \underbrace{\Big(\alpha_1-\sum_i\omega_i\alpha_i\Big)}_{\text{level mismatch}} +
  \underbrace{\Big(\ell_1-\sum_i\omega_i\ell_i\Big)^\top f_t}_{\text{loading mismatch}} +
  \underbrace{\Big(\varepsilon_{1t}-\sum_i\omega_i\varepsilon_{it}\Big)}_{\text{idiosyncratic noise}}.
  \label{eq:bias-decomp}
\end{equation}
\end{proposition}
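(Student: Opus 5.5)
The plan is a direct substitution argument using only the model \eqref{eq:factor-model}, the treatment assignment, and the simplex constraint $\sum_i\omega_i=1$. First I will write out the observed outcomes for $t>T_0$. The treated unit has $W_{1t}=1$, so $Y_{1t}=Y_{1t}(0)+\tau_{1t}$; I identify $\tau_t$ with $\tau_{1t}$. Every donor $i\in\mathcal{C}$ is never treated, so $W_{it}=0$ and $Y_{it}=Y_{it}(0)$. Hence
\[
\hat\tau_t-\tau_t = Y_{1t}(0)-\sum_{i\in\mathcal{C}}\omega_iY_{it}(0).
\]
The treatment effect therefore cancels exactly. This step is also where the no-spillover and never-treated assumptions on the donor pool are used.

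Next I will substitute $Y_{it}(0)=\alpha_i+\delta_t+\ell_i^\top f_t+\varepsilon_{it}$ for each unit and collect terms component by component.

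\emph{Time fixed effect.} This component contributes $\delta_t-\sum_i\omega_i\delta_t=\delta_t\big(1-\mathbf{1}^\top\omega\big)=0$, because $\omega\in\Delta^{N_0}$. This is the only place the adding-up constraint matters; nonnegativity plays no role.

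\emph{Unit effects and noise.} These give $\alpha_1-\sum_i\omega_i\alpha_i$ and $\varepsilon_{1t}-\sum_i\omega_i\varepsilon_{it}$, respectively.

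\emph{Factor component.} This gives $\ell_1^\top f_t-\sum_i\omega_i\ell_i^\top f_t$. Since $f_t$ is common to all units, linearity of the inner product rewrites it as $\big(\ell_1-\sum_i\omega_i\ell_i\big)^\top f_t$.

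Summing the three surviving pieces gives \eqref{eq:bias-decomp}.

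There is no genuine obstacle, since the identity is purely algebraic and holds for every $\omega$ in the simplex, whether or not it is data-dependent. The only points needing care are bookkeeping. First, $\tau_t$ should be read as $\tau_{1t}$. Second, the identity holds pointwise, not just in expectation, so no distributional assumptions on $\varepsilon$ are needed. Third, if $\omega=\hat\omega$ is estimated from pre-treatment data, the noise term remains correct as written; only its mean-zero interpretation would require additional independence between $\hat\omega$ and the post-treatment noise. I will state this as a remark rather than as part of the proof.
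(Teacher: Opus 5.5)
Your proposal is correct and follows the paper's own proof. Both substitute the factor model (treated outcome carrying $\tau_t$, donors untreated) into $\hat\tau_t=Y_{1t}-\sum_i\omega_iY_{it}$ and use $\sum_i\omega_i=1$ to cancel $\delta_t$. Your extra remarks (nonnegativity is unused, the identity holds pointwise, and it remains valid for a data-dependent $\hat\omega$) are accurate but not needed for the argument.
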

\begin{proof*}
Substitute \eqref{eq:factor-model} (with $Y_{1t}$ additionally containing $\tau_t$) into
$\hat\tau_t=Y_{1t}-\sum_i\omega_iY_{it}$ and use $\sum_i\omega_i=1$ to cancel $\delta_t$. \qed
\end{proof*}

\subsection{From spectral score balance to loading mismatch}
\label{sec:bound}

Proposition~\ref{prop:bias} isolates the loading-mismatch term as what spectral balancing targets. We now
bound it in terms of the observable spectral residual, using a best-linear-predictor (BLP) construction so
that the relevant linear map is defined, not merely assumed to exist.

\begin{definition}[BLP of loadings on spectral scores]
\label{def:blp}
Let $A_K\in\R^{R\times K}$ be the least-squares (best linear predictor) coefficient obtained by regressing
$\{\ell_i\}_{i\in\{1\}\cup\mathcal{C}}$ on $\{s_i\}_{i\in\{1\}\cup\mathcal{C}}$: $A_K :=
\argmin_{A}\sum_i\|\ell_i-As_i\|_2^2$, and let $r_i:=\ell_i-A_Ks_i$ be the resulting residuals. $A_K$ is well
defined whenever the $s_i$ span $\R^K$ (generic whenever $N_0\ge K$), with no distributional or
correct-specification assumption required.
\end{definition}

\begin{lemma}[$A_K$ is exact when $K=R$ and $V_K$ spans the true factor subspace]
\label{lem:linear-map}
If unit and time fixed effects have been removed, so $y_{i,\text{pre}}=F_{\text{pre}}\ell_i+
\varepsilon_{i,\text{pre}}$, and $K=R$ with $A:=V_K^\top F_{\text{pre}}$ invertible, then the population BLP
coincides with $A_K=A^{-1}$ and $r_i=-A^{-1}V_K^\top\varepsilon_{i,\text{pre}}$.
\end{lemma}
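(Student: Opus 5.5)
The plan is to substitute the fixed-effects-free model directly into the score definition and then invert. Under the hypothesis $y_{i,\text{pre}}=F_{\text{pre}}\ell_i+\varepsilon_{i,\text{pre}}$, applying $V_K^\top$ gives
\begin{equation*}
s_i = V_K^\top F_{\text{pre}}\ell_i + V_K^\top\varepsilon_{i,\text{pre}} = A\ell_i + V_K^\top\varepsilon_{i,\text{pre}},
\end{equation*}
with $A=V_K^\top F_{\text{pre}}\in\R^{R\times R}$ square because $K=R$. Since $A$ is invertible, solving for the loading yields the exact identity $\ell_i = A^{-1}s_i - A^{-1}V_K^\top\varepsilon_{i,\text{pre}}$ for every unit $i\in\{1\}\cup\mathcal{C}$. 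This identity already displays the claimed residual once the linear map is identified with $A^{-1}$.

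The second step is to show that $A^{-1}$ is the population BLP coefficient. I will read ``population'' as the version of Definition~\ref{def:blp} in which the defining objective is replaced by its expectation over the idiosyncratic noise, and in which $V_K$ is treated as the fixed population object (consistent with Assumption~\ref{ass:spectral-pt}). The population normal equations are $\mathbb{E}[\ell s^\top]=A_K\,\mathbb{E}[ss^\top]$. Plugging in $\ell=A^{-1}s-A^{-1}V_K^\top\varepsilon$ gives
\begin{equation*}
\mathbb{E}[\ell s^\top]=A^{-1}\mathbb{E}[ss^\top]-A^{-1}V_K^\top\mathbb{E}[\varepsilon s^\top].
\end{equation*}
Hence $A_K=A^{-1}$ exactly when the last cross term vanishes. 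So the residuals $r_i=\ell_i-A^{-1}s_i=-A^{-1}V_K^\top\varepsilon_{i,\text{pre}}$ are the BLP residuals precisely when they are orthogonal to the scores.

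The main obstacle is this orthogonality. Because $s_i$ itself contains $V_K^\top\varepsilon_{i,\text{pre}}$, the cross moment $\mathbb{E}[\varepsilon s^\top]$ includes $\mathbb{E}[\varepsilon\varepsilon^\top]V_K$, which is nonzero unless the noise is absent. This is the classic errors-in-variables attenuation: with noisy scores the least-squares coefficient is shrunk away from $A^{-1}$.

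I would handle this by making explicit the interpretation under which the lemma is exact. The population BLP is computed on the noise-free scores $\bar s_i=A\ell_i$, equivalently in the limit where the noise contribution to the second moment of the scores is negligible relative to $A\,\mathbb{E}[\ell\ell^\top]A^\top$. There the normal equations give $A_K=\mathbb{E}[\ell\ell^\top]A^\top(A\,\mathbb{E}[\ell\ell^\top]A^\top)^{-1}=A^{-1}$ whenever $\mathbb{E}[\ell\ell^\top]$ is nonsingular.

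With $A_K=A^{-1}$ in hand, the residual formula then follows from the identity in the first step applied to the observed scores $s_i$. The residual $r_i$ is exactly the noise component mapped through $A^{-1}$, and it vanishes when $\varepsilon_{i,\text{pre}}=0$. I would state this interpretive convention (a noise-free or large-signal population BLP) explicitly in the proof, rather than claim the finite-sample least-squares coefficient equals $A^{-1}$.
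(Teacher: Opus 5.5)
Your first step is the paper's entire proof. The paper substitutes the demeaned model to get $s_i=A\ell_i+V_K^\top\varepsilon_{i,\text{pre}}$ and inverts $A$. It then concludes in one line that, because $\ell_i=A^{-1}s_i-A^{-1}V_K^\top\varepsilon_{i,\text{pre}}$ holds exactly for every $i$, $A^{-1}$ is ``in particular the least-squares solution.'' Your second step shows that this closing inference is invalid, and you are right. For any matrix $B$ the identity $\ell_i=Bs_i+(\ell_i-Bs_i)$ holds exactly, so exactness alone selects nothing. Least squares selects the $B$ whose residuals are orthogonal to the scores, and $-A^{-1}V_K^\top\varepsilon_{i,\text{pre}}$ is correlated with $s_i$ through the shared noise.

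Your normal-equation computation can be made concrete. Take $V_K$ fixed with orthonormal columns, noise mean zero with covariance $\sigma^2 I_{T_0}$ and independent of the loadings, and $\Sigma_\ell=\mathbb{E}[\ell\ell^\top]$ nonsingular. Then the BLP on observed scores is $\Sigma_\ell A^\top(A\Sigma_\ell A^\top+\sigma^2 I_K)^{-1}$, which equals $A^{-1}$ only at $\sigma^2=0$. So the statement read literally, with $A_K$ the coefficient on observed scores as in Definition~\ref{def:blp}, fails at every positive noise level. The paper's argument is valid only when the noise has no variance in $\mathrm{range}(V_K)$, and then the stated residual vanishes anyway. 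A re-scoping like yours is therefore necessary, not optional.

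Two refinements to your fix:
\begin{itemize}
\item The noise-free BLP and the large-signal limit are not ``equivalent.'' At any fixed positive noise level the coefficient above differs from $A^{-1}$, so present the large-signal version only as a limiting approximation.
\item Under your convention, $r_i=\ell_i-A^{-1}s_i$ pairs a coefficient fitted on noise-free scores with the observed scores. It is therefore not a BLP residual in the sense of Definition~\ref{def:blp}: it is not orthogonal to the $s_i$, and a fully noise-free convention would make it identically zero. State this explicitly.
\end{itemize}
The second point costs nothing downstream. The decomposition behind Proposition~\ref{prop:bound} uses only the identity $\ell_i=A_Ks_i+r_i$ and linearity, so it holds for any fixed matrix in place of $A_K$. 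Your pair $(A^{-1},\,-A^{-1}V_K^\top\varepsilon_{i,\text{pre}})$, which matches the pair the lemma states, is a legitimate choice there.
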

\begin{proof*}
$s_i=A\ell_i+V_K^\top\varepsilon_{i,\text{pre}}$, so $\ell_i=A^{-1}s_i-A^{-1}V_K^\top\varepsilon_{i,\text{pre}}$
exactly; since this linear relation holds exactly for every $i$, it is in particular the least-squares
solution. \qed
\end{proof*}

\begin{proposition}[Error decomposition and observable upper bound under approximate score representability]
\label{prop:bound}
For $A_K,r_i$ as in Definition~\ref{def:blp} and any $\omega\in\Delta^{N_0}$,
\begin{equation}
  \Big\|\ell_1-\sum_i\omega_i\ell_i\Big\|_2 \;\le\; \|A_K\|_{\mathrm{op}}\,
  \Big\|s_1-\sum_i\omega_is_i\Big\|_2 \;+\; \Big\|r_1-\sum_i\omega_ir_i\Big\|_2,
  \label{eq:bound}
\end{equation}
and, at any $t>T_0$, $\big|(\ell_1-\sum_i\omega_i\ell_i)^\top f_t\big| \le \|f_t\|_2\big[\|A_K\|_{\mathrm{op}}
\|s_1-\sum_i\omega_is_i\|_2 + \|r_1-\sum_i\omega_ir_i\|_2\big]$.
\end{proposition}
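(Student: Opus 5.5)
The plan is to write $\ell_i = A_K s_i + r_i$ for every unit $i\in\{1\}\cup\mathcal{C}$, which holds by the definition of the residuals in Definition~\ref{def:blp}. This is an identity, not an approximation, so no distributional assumption enters. Taking the $\omega$-weighted combination and using linearity, I would establish
\begin{equation*}
  \ell_1-\sum_i\omega_i\ell_i \;=\; A_K\Big(s_1-\sum_i\omega_is_i\Big) \;+\; \Big(r_1-\sum_i\omega_ir_i\Big).
\end{equation*}
This is the exact error decomposition referred to in the proposition's title. It holds for any real weight vector $\omega$; the simplex constraint is not even needed here.

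Next I would take Euclidean norms and apply the triangle inequality. I would then bound the first term by the definition of the operator norm, $\|A_K x\|_2\le\|A_K\|_{\mathrm{op}}\|x\|_2$ with $x=s_1-\sum_i\omega_is_i$, which gives \eqref{eq:bound} directly.

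For the second claim, I would fix $t>T_0$ and apply Cauchy--Schwarz, $|v^\top f_t|\le\|v\|_2\|f_t\|_2$ with $v=\ell_1-\sum_i\omega_i\ell_i$. Substituting \eqref{eq:bound} then gives the stated inequality. Combined with Proposition~\ref{prop:bias}, this bounds the loading-mismatch component of $\hat\tau_t-\tau_t$.

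There is no real obstacle; the only point needing care is well-definedness. $A_K$ must exist, and if the least-squares minimizer is not unique (the $s_i$ fail to span $\R^K$), any fixed minimizer works equally well. The decomposition uses only the definition $r_i:=\ell_i-A_Ks_i$, not any optimality property of $A_K$. I would remark on this explicitly, and note that the residual term is not observable. Lemma~\ref{lem:linear-map} shows the residual term reduces to projected noise when $K=R$ and fixed effects have been removed.
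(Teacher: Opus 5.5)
Your proposal is correct and matches the paper's proof: the same exact decomposition $\ell_1-\sum_i\omega_i\ell_i = A_K(s_1-\sum_i\omega_is_i)+(r_1-\sum_i\omega_ir_i)$, then the triangle inequality with the operator-norm bound, then Cauchy--Schwarz. Your remark that this decomposition holds for any real $\omega$ is also right: the paper cites $\sum_i\omega_i=1$ at that step, but the regression in Definition~\ref{def:blp} has no intercept, so that constraint is never actually used.
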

\begin{proof*}
$\ell_1-\sum_i\omega_i\ell_i = A_K(s_1-\sum_i\omega_is_i)+(r_1-\sum_i\omega_ir_i)$ by construction and
$\sum_i\omega_i=1$; triangle inequality and submultiplicativity give the first bound, Cauchy--Schwarz the
second. \qed
\end{proof*}

This inequality is close to algebraic once Definition~\ref{def:blp} is in place, and we do not present it as
a deep identification theorem; its value is interpretive. It makes precise (i) that spectral balancing
controls loading mismatch \emph{only through} $\|A_K\|_{\mathrm{op}}$ times the achieved spectral residual --
exactly the quantity \eqref{eq:spectral-sc} minimizes -- and (ii) that the BLP residual term $r_i$, which
absorbs both omitted-factor error ($K<R$) and basis-estimation error, is invisible to the spectral objective
and not controlled by it at all. When $r_i$ is large, \eqref{eq:bound} is uninformative regardless of how well
\eqref{eq:spectral-sc} is solved -- consistent with the simulation evidence in
Sections~\ref{sec:preprocessing}--\ref{sec:simulation}.

\section{Preprocessing and the Choice of Basis}
\label{sec:preprocessing}

Lemma~\ref{lem:linear-map} assumes unit and time fixed effects have already been removed before the SVD. The
practical estimator in Section~\ref{sec:spectral}, and the simulation study as originally run, applies the
SVD to \emph{raw} pre-treatment paths. This matters: if $\alpha_i$ varies substantially across donors, the
leading singular directions of $Y_{0,\text{pre}}$ can be dominated by cross-unit level differences rather
than by the interactive factor structure in \eqref{eq:factor-model}, in which case $V_K$ estimated from raw
data is a poor estimate of the population factor subspace and the BLP residual $r_i$ in
Proposition~\ref{prop:bound} is correspondingly large through no fault of the truncation rule itself.

We evaluate three preprocessing choices, applied identically to raw-path SC, Spectral SC, and the hybrid
estimator so the comparison stays apples-to-apples:
\begin{itemize}[leftmargin=1.6em,itemsep=1pt]
  \item \textbf{Raw:} balancing and basis estimation use $Y_{0,\text{pre}},y_{1,\text{pre}}$ directly (the
        estimator as defined in Section~\ref{sec:spectral} and used in the main study).
  \item \textbf{Unit-demeaned:} each unit's own pre-treatment mean is subtracted before balancing and basis
        estimation; the counterfactual is reconstructed as $\sum_i\hat\omega_iY_{it}$ plus an intercept
        correction $\bar y_{1,\text{pre}}-\sum_i\hat\omega_i\bar y_{i,\text{pre}}$, so the estimator remains a
        donor-weighted average with a level adjustment, not a forecast.
  \item \textbf{Two-way demeaned:} unit means and donor-pool time means are both removed (double centering)
        before balancing and basis estimation, with the same intercept correction as above.
\end{itemize}

\begin{table}[h]
\centering
\small
\begin{tabular}{lrrrrc}
\toprule
Preprocessing & SC RMSE & Spectral RMSE & $\Delta$(Spectral$-$SC) & $\bar\eta$ & $\Pr(\hat\eta{=}0)$ \\
\midrule
Raw              & 0.283 (0.019) & 0.431 (0.025) & 0.148 (0.025) & 0.959 & 0.008 \\
Unit-demeaned    & 0.213 (0.012) & 0.238 (0.016) & 0.025 (0.012) & 0.878 & 0.032 \\
Two-way demeaned & 0.223 (0.013) & 0.224 (0.013) & 0.001 (0.004) & 0.295 & 0.568 \\
\bottomrule
\end{tabular}
\caption{Preprocessing robustness check, baseline regime, $N_0=30$, $K=2$, $250$ replications per row.
Standard errors in parentheses; $\Delta$ is the paired RMSE difference (same simulated panels).}
\label{tab:preprocessing}
\end{table}

Table~\ref{tab:preprocessing} shows a reversal, not merely an attenuation. Under raw input, Spectral SC's
RMSE exceeds tuned SC's by $0.148$ ($\mathrm{SE}\ 0.025$), about six standard errors -- consistent with the
main study in Section~\ref{sec:simulation}, which uses raw input throughout. Under unit-demeaning the gap
falls to $0.025$ ($\mathrm{SE}\ 0.012$), about two standard errors. Under two-way demeaning the gap is $0.001$
($\mathrm{SE}\ 0.004$), statistically indistinguishable from zero, and -- more strikingly -- the
placebo-tuning procedure itself switches its preference: the mean selected $\eta$ falls from $0.96$ (raw) to
$0.30$ (two-way demeaned), with a majority of replications ($57\%$) now selecting $\eta=0$, full truncation,
rather than $\eta=1$. This is exactly the pattern Proposition~\ref{prop:bound} predicts: removing fixed
effects before the SVD shrinks the BLP residual $r_i$ by bringing $V_K$ closer to the true factor subspace,
and the placebo procedure -- which never sees this proposition, only post-treatment squared error on held-out
donors -- detects the resulting improvement in $\|A_K\|_{\mathrm{op}}$-scaled control on its own.

We do not rerun the full eleven-regime study of Section~\ref{sec:simulation} under two-way demeaning; doing
so, and mapping out exactly which regimes benefit and by how much, is a natural next step this note does not
complete. We report Table~\ref{tab:preprocessing} as a robustness check on the baseline regime and flag
explicitly that the negative results in Section~\ref{sec:simulation} are for raw-path input specifically, not
for spectral matching under every reasonable preprocessing choice.

\section{Hyperparameter Selection}
\label{sec:algorithm}

We tune $\lambda_\omega$, and, for the hybrid estimator, $\eta$, by donor-only leave-one-out placebo
validation: for each donor $j$ in a randomly drawn subset, we treat $j$ as if treated, re-estimate the basis
from the remaining donors, balance $j$'s pre-treatment path against them at each candidate
$(\lambda_\omega,\eta)$, and record squared error on $j$'s actual post-treatment path, never using
post-treatment or treated-unit data. We use the grids $\lambda_\omega\in\{10^{-4},10^{-3},10^{-2},10^{-1},1\}$
and $\eta\in\{0,0.15,0.35,0.5,0.65,0.85,1\}$.

\paragraph{$K$ is fixed by design, not tuned.} Throughout this paper, the rank $K$ is a prespecified input to
Algorithm~\ref{alg:spectral-did}, chosen per regime as part of the experimental design (e.g.\ set equal to,
below, or above the data-generating process's true factor rank $R$, to study misspecification directly, as in
the weak-factor regimes of Section~\ref{sec:simulation}). We tune only $(\lambda_\omega,\eta)$ by placebo
validation. We do not fold $K$ into the same procedure because the placebo objective is comparable across
candidate $\lambda_\omega,\eta$ values at fixed $K$ (all evaluated in the same $T_0$-dimensional held-out
squared-error units) but is not obviously comparable across $K$ within a single, unified selection rule
without additional structure; treating $K$ as fixed avoids that complication and keeps the object of study --
the consequence of a given truncation choice -- separate from the object of the search.

\begin{algorithm}[h]
\caption{Spectral / hybrid SC with placebo-tuned $(\lambda_\omega,\eta)$ at fixed $K$}
\label{alg:spectral-did}
\begin{algorithmic}[1]
\Require Donor pre-treatment matrix $Y_{0,\text{pre}}$; treated pre-path $y_{1,\text{pre}}$; full donor and
treated outcomes; a \emph{fixed} rank $K$; candidate grids for $\lambda_\omega$ and $\eta$; preprocessing mode.
\State \textbf{Preprocess and estimate basis (donor data only):} apply the chosen preprocessing
(Section~\ref{sec:preprocessing}) to $Y_{0,\text{pre}}$; compute $V_K$ from the processed donor matrix.
\State \textbf{Placebo-tune $(\lambda_\omega,\eta)$ at the fixed $K$:} as described above; select the pair
minimizing average placebo post-treatment squared error.
\State \textbf{Balance and predict:} solve \eqref{eq:hybrid} for $\hat\omega$ at the selected
$(\lambda_\omega,\eta)$; form $\hat Y_{1t}(0)$ (with the intercept correction of
Section~\ref{sec:preprocessing} if preprocessing is not raw); $\hat\tau_t=Y_{1t}-\hat Y_{1t}(0)$.
\State \Return $\hat\tau$, $\hat\omega$, selected $(\lambda_\omega,\eta)$.
\end{algorithmic}
\end{algorithm}

\paragraph{Sensitivity to placebo sample size.} Hyperparameters above are tuned using $4$ randomly drawn
placebo donors per replication. Table~\ref{tab:placebo} checks whether this is large enough by comparing $4$,
$10$, and all $29$ eligible donors on the baseline regime.

\begin{table}[h]
\centering
\small
\begin{tabular}{lrrrrl}
\toprule
Placebo donors & $N$ (replications) & $\bar\eta$ & $\Pr(\hat\eta=1)$ & $\Delta$(Spectral$-$SC) RMSE \\
\midrule
4    & 120 & 0.970 & 0.917 & 0.147 (0.038) \\
10   & 120 & 0.972 & 0.950 & 0.148 (0.038) \\
29 (all) & 50  & 0.994 & 0.960 & 0.194 (0.056) \\
\bottomrule
\end{tabular}
\caption{Placebo-sample-size sensitivity, baseline regime, raw preprocessing. $N$ is reduced for the
all-donor row for compute feasibility; standard errors in parentheses.}
\label{tab:placebo}
\end{table}

The preference for raw-path matching does not depend on using a small, noisy placebo sample: it is, if
anything, stronger with more placebo donors. This addresses the natural concern that four donors might be too
few to reliably estimate the placebo objective and could itself explain the endpoint selection.

\section{Simulation Study}
\label{sec:simulation}

We compare DiD, raw-path SC, truncated Spectral SC, and the hybrid estimator, each (except DiD) with
$(\lambda_\omega,\eta)$ tuned by the placebo procedure of Section~\ref{sec:algorithm} at a fixed, regime-specified
$K$, across eleven raw-input regimes chosen to favor truncation, with $400$ replications per regime.

\emph{The purpose of this comparison is not to identify the best available panel estimator.} SDID, augmented
synthetic control, generalized synthetic control, and matrix completion are established, independently
benchmarked alternatives that we do not evaluate here. The purpose is narrower: to isolate the incremental
effect of replacing raw-path SC's matching metric with a truncated, or partially truncated, spectral metric,
holding the donor-weighted-average architecture fixed.

\subsection{Design}

The baseline regime uses \eqref{eq:factor-model} with $N_0=30$, $T_0=20$, $T_1=10$, $\alpha_i,\delta_t\sim
N(0,1)$, $\ell_i\sim N(0,I_R)$, $f_t$ a random walk (increment sd $0.3$), $\varepsilon_{it}\sim N(0,0.3^2)$,
constant post-treatment effect $\tau_t=2$, $R=K=2$. Ten further regimes each change one feature: sparse
oracle donors, clustered donor loadings, treated near a hull vertex, $T_0\gg N_0$ ($N_0=10,T_0=150$),
high-frequency noise, a weak third factor tested at $K=2$ and $K=3$, post-treatment factor rotation, and a
treatment-correlated weak factor (elevated treated loading achieved via an in-hull convex combination
concentrated on the donors with the highest loading on that factor, so representability holds by
construction) tested at $K=2$ and $K=3$. Full generative detail for every regime is in
Appendix~\ref{sec:appendix}.

\begin{table}[h]
\centering
\footnotesize
\begin{tabular}{lrrrr}
\toprule
Regime & DiD & SC (tuned) & Spectral (tuned) & Hybrid (tuned) \\
\midrule
Baseline                & 0.003 (0.013)  & 0.013 (0.013)  & 0.009 (0.021)  & 0.012 (0.013)  \\
Sparse oracle donors     & 0.019 (0.032)  & 0.011 (0.015)  & 0.016 (0.025)  & 0.014 (0.016)  \\
Clustered loadings       & $-$0.060 (0.080) & $-$0.038 (0.024) & $-$0.050 (0.032) & $-$0.036 (0.024) \\
Treated at hull edge     & $-$0.085 (0.072) & $-$0.018 (0.027) & $-$0.030 (0.043) & $-$0.020 (0.027) \\
$T_0\gg N_0$              & 0.063 (0.044)  & 0.023 (0.018)  & 0.034 (0.027)  & 0.022 (0.019)  \\
High-frequency noise      & 0.007 (0.017)  & 0.027 (0.021)  & 0.028 (0.026)  & 0.026 (0.021)  \\
Weak factor, $K=2$        & 0.001 (0.014)  & $-$0.014 (0.014) & $-$0.023 (0.021) & $-$0.012 (0.014) \\
Weak factor, $K=3$        & 0.001 (0.014)  & $-$0.014 (0.014) & $-$0.022 (0.016) & $-$0.016 (0.014) \\
Post-treatment rotation   & 0.003 (0.015)  & 0.023 (0.018)  & 0.020 (0.023)  & 0.022 (0.018)  \\
Confounded weak factor, $K=2$ & $-$0.019 (0.079) & 0.009 (0.068) & $-$0.013 (0.075) & 0.015 (0.068) \\
Confounded weak factor, $K=3$ & $-$0.019 (0.079) & 0.009 (0.068) & 0.004 (0.072)  & 0.008 (0.069)  \\
\bottomrule
\end{tabular}
\caption{Bias (Monte Carlo standard error), $\tau_{\text{true}}=2$, $400$ replications per regime.}
\label{tab:bias}
\end{table}

\begin{table}[h]
\centering
\footnotesize
\begin{tabular}{lrrrr}
\toprule
Regime & DiD & SC (tuned) & Spectral (tuned) & Hybrid (tuned) \\
\midrule
Baseline                & 0.263 (0.014) & 0.268 (0.016) & 0.423 (0.019) & 0.268 (0.015) \\
Sparse oracle donors     & 0.646 (0.038) & 0.304 (0.018) & 0.509 (0.026) & 0.314 (0.021) \\
Clustered loadings       & 1.598 (0.099) & 0.479 (0.028) & 0.645 (0.036) & 0.481 (0.026) \\
Treated at hull edge     & 1.449 (0.090) & 0.543 (0.038) & 0.866 (0.048) & 0.547 (0.038) \\
$T_0\gg N_0$              & 0.884 (0.045) & 0.369 (0.034) & 0.548 (0.036) & 0.384 (0.033) \\
High-frequency noise      & 0.339 (0.013) & 0.418 (0.017) & 0.516 (0.022) & 0.427 (0.017) \\
Weak factor, $K=2$        & 0.271 (0.014) & 0.275 (0.016) & 0.425 (0.032) & 0.278 (0.016) \\
Weak factor, $K=3$        & 0.271 (0.014) & 0.275 (0.016) & 0.321 (0.021) & 0.277 (0.016) \\
Post-treatment rotation   & 0.297 (0.019) & 0.351 (0.020) & 0.455 (0.020) & 0.353 (0.019) \\
Confounded weak factor, $K=2$ & 1.588 (0.068) & 1.367 (0.071) & 1.502 (0.069) & 1.364 (0.065) \\
Confounded weak factor, $K=3$ & 1.588 (0.068) & 1.367 (0.071) & 1.440 (0.068) & 1.383 (0.068) \\
\bottomrule
\end{tabular}
\caption{RMSE (Monte Carlo standard error, $400$-fold bootstrap), $400$ replications per regime.}
\label{tab:rmse}
\end{table}

\begin{table}[h]
\centering
\footnotesize
\begin{tabular}{lrrrr}
\toprule
Regime & $\Delta$ RMSE(Spectral$-$SC) & SE & $\Delta$ RMSE(Hybrid$-$SC) & SE \\
\midrule
Baseline                & 0.155 & 0.019 & 0.001  & 0.001 \\
Sparse oracle donors     & 0.204 & 0.018 & 0.010  & 0.006 \\
Clustered loadings       & 0.166 & 0.020 & 0.002  & 0.001 \\
Treated at hull edge     & 0.323 & 0.031 & 0.004  & 0.002 \\
$T_0\gg N_0$              & 0.180 & 0.021 & 0.015  & 0.005 \\
High-frequency noise      & 0.098 & 0.020 & 0.009  & 0.006 \\
Weak factor, $K=2$        & 0.150 & 0.021 & 0.004  & 0.003 \\
Weak factor, $K=3$        & 0.046 & 0.010 & 0.003  & 0.002 \\
Post-treatment rotation   & 0.103 & 0.019 & 0.002  & 0.003 \\
Confounded weak factor, $K=2$ & 0.135 & 0.029 & $-$0.003 & 0.017 \\
Confounded weak factor, $K=3$ & 0.073 & 0.018 & 0.016  & 0.005 \\
\bottomrule
\end{tabular}
\caption{Paired RMSE differences (same simulated panels; $1{,}000$-fold paired bootstrap SE). Positive values
favor SC.}
\label{tab:paired}
\end{table}

\subsection{What the simulation shows}

\textbf{Truncated Spectral SC's RMSE disadvantage is precisely estimated and large relative to its standard
error in every regime.} Table~\ref{tab:paired} reports paired differences on the same simulated panels rather
than treating the two RMSE standard errors in Table~\ref{tab:rmse} as independent, which is the more
appropriate comparison here. The Spectral-vs-SC gap ranges from $0.046$ ($\mathrm{SE}\ 0.010$, weak factor at
$K=3$) to $0.323$ ($\mathrm{SE}\ 0.031$, hull edge) -- four to eleven standard errors from zero in every
regime, including the ones constructed to favor truncation.

\textbf{The hybrid estimator is statistically indistinguishable from tuned SC in most regimes.} Nine of eleven
paired Hybrid-vs-SC differences in Table~\ref{tab:paired} are under two standard errors from zero. The two
exceptions -- $T_0\gg N_0$ ($0.015$, SE $0.005$) and the confounded weak factor at $K=3$ ($0.016$, SE
$0.005$) -- are real but an order of magnitude smaller than the corresponding Spectral-vs-SC gaps in the same
regimes ($0.180$ and $0.073$). Consistent with this, the placebo-tuned mixing weight selects $\hat\eta=1$
(raw-path matching) in a majority of replications in every regime (Table~\ref{tab:eta}, unchanged from the
mechanism described in Section~\ref{sec:algorithm}), with $\Pr(\hat\eta=1)$ ranging from $0.67$ to $0.91$.

\begin{table}[h]
\centering
\footnotesize
\begin{tabular}{lrrrr}
\toprule
Regime & $\bar\eta$ & $\Pr(\hat\eta=0)$ & $\Pr(\hat\eta=1)$ & $\Pr(0<\hat\eta<1)$ \\
\midrule
Baseline                & 0.948 & 0.013 & 0.898 & 0.090 \\
Sparse oracle donors     & 0.948 & 0.015 & 0.895 & 0.090 \\
Clustered loadings       & 0.922 & 0.048 & 0.868 & 0.085 \\
Treated at hull edge     & 0.945 & 0.025 & 0.910 & 0.065 \\
$T_0\gg N_0$              & 0.788 & 0.123 & 0.705 & 0.173 \\
High-frequency noise      & 0.821 & 0.060 & 0.673 & 0.268 \\
Weak factor, $K=2$        & 0.928 & 0.025 & 0.873 & 0.103 \\
Weak factor, $K=3$        & 0.854 & 0.088 & 0.805 & 0.108 \\
Post-treatment rotation   & 0.940 & 0.020 & 0.898 & 0.083 \\
Confounded weak factor, $K=2$ & 0.832 & 0.098 & 0.760 & 0.143 \\
Confounded weak factor, $K=3$ & 0.788 & 0.150 & 0.730 & 0.120 \\
\bottomrule
\end{tabular}
\caption{Empirical distribution of the hybrid estimator's placebo-selected $\hat\eta$, $400$ replications per
regime.}
\label{tab:eta}
\end{table}

\textbf{The confounded weak-factor regime isolates the risk that unsupervised truncation discards a
low-variance, causally relevant direction.} With the relevant factor guaranteed in-hull by construction,
Spectral SC's RMSE falls from $1.502$ at $K=2$ (excluding it) to $1.440$ at $K=3$ (including it) -- a real,
partial recovery, and the paired Spectral-vs-SC gap shrinks correspondingly (from $0.135$ to $0.073$ SE
$0.018$--$0.029$) -- while raw-path SC and the hybrid estimator, which retain at least partial weight on all
directions regardless of $K$, are comparatively stable and remain the best performers at both values of $K$.
This is exactly the pattern Proposition~\ref{prop:bound} predicts.

We read the overall pattern as evidence for the two mechanisms formalized in
Sections~\ref{sec:underdetermination} and \ref{sec:preprocessing}, not as a general indictment of the
motivating hypothesis: under raw-input matching, low-dimensional balancing is underdetermined relative to the
donor pool (Proposition~\ref{prop:underdetermination}) and the basis itself is contaminated by unremoved
fixed effects, inflating the BLP residual in Proposition~\ref{prop:bound}. Section~\ref{sec:preprocessing}
shows removing that contamination removes most of the gap on the one regime we checked; the eleven-regime
study here does not.

\section{Discussion}
\label{sec:discussion}

\subsection{Related work}
\label{sec:related}

Matrix completion \citep{athey2021}, generalized synthetic control \citep{xu2017}, robust synthetic control
\citep{amjad2018}, and augmented synthetic control \citep{benmichael2021} use low-rank or factor-model
structure to correct or regularize SC-type estimators; SDID \citep{arkhangelsky2021} is analyzed under a
latent factor model. Synthetic Principal Component Design \citep{lu2022} uses spectral ideas for the
experimental-design problem of which units to treat, solved with a power-method-based combinatorial
optimization -- a different problem from the post-treatment score balancing studied here. A functional
extension of generalized synthetic control \citep{shao2026} uses functional principal component scores for
sparse, irregular panels within a Bayesian hierarchical model, addressing a data-sparsity problem this note
does not consider.

Harmonic Synthetic Control \citep{liu2026} is the closest prior work in spirit: it also introduces a
continuously tunable spectral allocation governed by a single parameter. It addresses a different problem by
a different mechanism, however. Harmonic Synthetic Control targets unit-specific stochastic trends in
nonstationary data by splitting variation between donor matching and a residual time-series forecasting
component, with its tuning parameter, selected by rolling-origin cross-validation, governing the allocation
between these two branches; its spectral interpretation shows how this downweights low-frequency residual
components in donor matching in favor of the forecasting branch. Our hybrid estimator introduces no
forecasting branch: $\eta$ modifies only the geometry of pre-treatment donor matching, by shrinking
discrepancies outside a donor-estimated principal subspace, with the counterfactual always a donor-weighted
average rather than a forecast. The two tuning parameters connect different pairs of objects -- theirs
interpolates between donor matching and extrapolation, ours between truncated and raw-path matching.

\paragraph{Why spectral completion is a different estimator.} A separate way to use the low-rank structure in
\eqref{eq:factor-model} is to estimate the untreated panel directly via low-rank regression on donor and
treated pre-treatment cells, then read off $\hat\tau_t=Y_{1t}-\hat\alpha_1-\hat\delta_t-\hat L_{1t}$ for
$t>T_0$. This is legitimate and correctly identified when the treatment effect is computed as a residual
outside the fitting step, but it is a version of existing matrix-completion and interactive-fixed-effects
estimators \citep{athey2021,xu2017,bai2009} rather than a variant of the donor-weighted balancing family
studied here, and we do not evaluate it in Section~\ref{sec:simulation}. Fitting the treatment effect jointly
with the low-rank component, rather than as a residual, requires an explicit restriction preventing the
low-rank term from absorbing the treatment pattern itself, since otherwise the treatment parameter does not
appear in a loss computed only on untreated cells.

\subsection{What remains open}

\begin{itemize}[leftmargin=1.6em]
  \item \textbf{The preprocessing reversal is checked on one regime.} Table~\ref{tab:preprocessing} shows
        two-way demeaning nearly eliminates the baseline regime's gap and flips the placebo-selected $\eta$;
        whether this holds across all eleven regimes, and especially in the ones with the largest raw-input
        gaps (hull edge, sparse donors), is not yet known.
  \item \textbf{$K$-selection is not addressed.} We fix $K$ by design and tune only
        $(\lambda_\omega,\eta)$; a principled joint selection of $K$ together with the other hyperparameters,
        and an understanding of why the placebo objective is not directly comparable across $K$, is left open.
  \item \textbf{Rotational identification.} $V_K$ is identified as a subspace, not as individually labeled
        components.
  \item \textbf{Inference after a data-dependent basis and hyperparameters.} $\hat V_K,\hat\lambda_\omega,
        \hat\eta$ are estimated from the same sample used to form $\hat\tau$; naive standard errors treating
        them as fixed understate uncertainty. This remains the largest methodological gap in the paper.
  \item \textbf{Staggered adoption and multiple treated units} are not addressed here.
\end{itemize}

\section{Conclusion}
\label{sec:conclusion}

We characterized two specific mechanisms governing when spectral truncation can help or hurt synthetic
control: an exact equivalence to raw-path SC at full rank, a proof that low-dimensional spectral balance is
generically underdetermined by $N_0-K-1$ degrees of freedom, and a finite-sample bound tying spectral
balancing quality to treatment-effect bias through a best-linear-predictor residual that basis-estimation
noise and omitted factors inflate. Across eleven raw-input regimes chosen to favor truncation, truncated
Spectral SC had significantly higher RMSE than tuned raw-path SC throughout, and a hybrid estimator free to
select anywhere between truncated and raw-path matching selected raw-path matching in a majority of
replications in every regime. A robustness check on the baseline regime shows this conclusion is specific to
raw-input matching: removing fixed effects before the spectral decomposition -- exactly the condition our
bound assumes -- nearly eliminates the gap and reverses the placebo procedure's preference. We present the
paper as a diagnostic result identifying when and why spectral truncation fails under raw-path matching,
alongside concrete evidence for what changes when that condition is addressed, rather than as a method ready
to replace raw-path SC.

\appendix
\section{Reproducibility Details}
\label{sec:appendix}

\paragraph{Baseline parameters.} $N_0=30$ donors, $T_0=20$ pre-treatment periods, $T_1=10$ post-treatment
periods, true factor rank $R=2$ (or $3$ in the weak-factor and confounded-weak-factor regimes), $K=2$ unless
stated otherwise. $\alpha_i,\delta_t\overset{iid}{\sim}N(0,1)$; $\ell_i\overset{iid}{\sim}N(0,I_R)$ for
non-cluster, non-sparse, non-edge, non-confounded regimes; $f_t=f_{t-1}+\xi_t$, $\xi_t\sim N(0,0.3^2I_R)$,
$f_0=0$; $\varepsilon_{it}\overset{iid}{\sim}N(0,0.3^2)$; constant post-treatment effect $\tau_t=2$.

\paragraph{Regime-specific generation.} \emph{Sparse}: treated loading $\ell_1=\sum_i w_i\ell_i$ with
$w\sim\mathrm{Dirichlet}(0.15\cdot\mathbf{1}_{N_0})$. \emph{Cluster}: two component centers
$\sim N(0,1.6^2I_R)$; each donor loading is its assigned center plus $N(0,0.3^2I_R)$ noise; treated loading is
the first center plus independent $N(0,0.3^2I_R)$ noise. \emph{Edge}: treated loading equals a uniformly
random donor's loading plus $N(0,0.05^2I_R)$ noise. \emph{High-frequency noise}: an additional term
$0.6\times(-1)^t\times N(0,1)$ is added to every donor and treated observation at every $t$. \emph{Weak
factor}: the third factor path is scaled by $0.25$ at every $t$. \emph{Post-treatment rotation}: for
$t>T_0$, the first two factor coordinates are rotated by angle $\theta\sim\mathrm{Uniform}(0.3,0.9)$ radians.
\emph{Confounded weak factor}: the third factor is scaled by $0.15$ for $t\le T_0$ and left unscaled for
$t>T_0$; donors are ranked by their third-factor loading, the top third form a pool, and the treated loading
is $\sum_iw_i\ell_i$ with $w$ supported only on that pool, $w\sim\mathrm{Dirichlet}(\mathbf{1})$ on its
support. \emph{$T_0\gg N_0$}: $N_0=10$, $T_0=150$, $T_1=20$, otherwise as baseline.

\paragraph{Optimization.} All balancing problems are solved by projected gradient descent on the simplex
(Euclidean projection via the standard $O(n\log n)$ algorithm), with step size $1/L$ for Lipschitz constant
$L=2\lambda_{\max}(X^\top MX)+2\lambda_\omega+10^{-9}$; $60$ iterations for the final applied fit, $25$
iterations during placebo tuning (verified to match a $150$-iteration solve to within $0.1\%$ of objective
value). Tie-breaking in placebo selection: grid search iterates $\lambda_\omega$ in the order listed, and (for
the hybrid estimator) $\eta$ nested within each $\lambda_\omega$, in the order listed; the first
$(\lambda_\omega,\eta)$ attaining the minimum average placebo squared error is retained.

\paragraph{Randomization and replication.} Replication $m$ of a given regime uses NumPy generator
\texttt{default\_rng(seed0 + m)} with \texttt{seed0=2000}, controlling all randomness in that replication
(panel generation and placebo donor subsampling) from a single stream. Monte Carlo standard errors for bias
are analytic ($\mathrm{sd}/\sqrt{B}$); for RMSE and for paired RMSE differences, a $400$-fold (respectively
$1{,}000$-fold) nonparametric bootstrap over replication indices, applied jointly to paired estimators to
preserve pairing.

\paragraph{Code availability.} A complete reference implementation, including the exact generative code for
all eleven regimes, the projected-gradient solver, the placebo-tuning procedure, and the preprocessing
variants, is provided as supplementary code (\texttt{simulation.py}) accompanying this submission.

\bibliographystyle{plainnat}

\end{document}